\documentclass[a4paper,11pt]{article}
\usepackage{amsmath}
\usepackage{amsthm}
\usepackage{amsfonts}
\usepackage{mathbbol}
\usepackage{mathrsfs}
\numberwithin{equation}{section}
\usepackage{setspace}
\usepackage{hyperref}
\hypersetup{hidelinks}
\usepackage{latexsym}
\usepackage{xcolor}
\usepackage{cite}
\usepackage[top=1 in,bottom=1 in,left=1.0 in,right=1.0 in]{geometry}

\newtheorem{proposition}{Proposition}[section]

\newtheorem{remark}{Remark}[section]

\allowdisplaybreaks
			
\title{Nonlinearization of bilinear equations of the sine-Gordon type,
nonlinear Schr\"odinger type and Benjamin-Ono type}

\author{Jin Liu$^{1,2}$,~~ Da-jun Zhang$^{1,2}$\footnote{Corresponding author.
		Email: djzhang@staff.shu.edu.cn},
    ~~ Xin Zhang$^{1}$,
	~~ Xuehui Zhao$^{3,4,5}$\\
{\small $^{1}$Department of Mathematics, Shanghai University, Shanghai 200444, China} \\
{\small $^{2}$Newtouch Center for Mathematics of Shanghai University,  Shanghai 200444, China}\\
{\small $^{3}$College of Mathematical Science, Inner Mongolia Normal University, Hohhot 010022, China}\\
{\small $^{4}$Center for Applied Mathematical Science, Inner Mongolia Normal University, Hohhot 010022, China}\\
{\small $^{5}$Laboratory of Infinite-dimensional Hamiltonian System and Its Algorithm Application,}\\
{\small Inner Mongolia Normal University, Hohhot 010022,  China}}

\date{\today}
			
\begin{document}
				
\maketitle

\begin{abstract}

This  is a continuation of the paper [Commun. Theor. Phys., 77 (2025) 115006]
on the nonlinearization of bilinear equations.
The sine-Gordon type and nonlinear Schr\"odinger type bilinear equations are introduced by
Jarmo Hietarinta during his search for integrable bilinear equations.
In this paper, we provide a formulation to convert these two types of bilinear equations into nonlinear forms.
In addition, the nonlinearization related to the equations involving the Hilbert transformations is also considered.
Bell polynomials are employed in the nonlinearization and illustrative examples are provided.
					
\vskip 6pt
\noindent
Keywords: Hirota bilinear equation, sine-Gordon type, nonlinear Schr\"odinger type,
Hilbert transformation, Bell polynomial\\
PACS Numbers: 02.30.Ik, 02.30.Ks, 05.45.Yv%\\
%MSC: 35Q51, 35Q55

\end{abstract}

%\tableofcontents
				
\section{Introduction}\label{sec-1}

Hietarinta searched for integrable bilinear equations from
the Korteweg-de Vries (KdV) type \cite{Hie-1987a},
the modified KdV (mKdV) type \cite{Hie-1987b},
the sine-Gordon (sG) type \cite{Hie-1987c} and the nonlinear Schr\"odinger (NLS) type \cite{Hie-1988}.
For the KdV-type bilinear equations, Hirota found they always have 2-soliton solutions
no matter whether they are integrable or not \cite{H-1980}.
Hirota conjectured having a 3-soliton solution indicates a kind of integrability,
which is later referred to as the integrability for bilinear equations in Hirota's sense.
In Hietarinta's search results, he listed out possible
bilinear equations of the KdV type, the mKdV type and the sG type
that have 3-soliton solutions \cite{Hie-1987a,Hie-1987b,Hie-1987c},
while for the NLS-type bilinear equations which involve complex conjugation,
he pointed out  having a 2-soliton solution might indicate the integrability \cite{Hie-1988}.
In Hietarinta's investigation, he found some new integrable bilinear equations,
but whose nonlinear forms are difficult to write out.
Recently, we developed an approach to convert bilinear equations into nonlinear forms
by utilizing  Bell polynomials \cite{ZLZ-CTP-2025},
by which we are able to nonlinearize the KdV-type and mKdV-type bilinear equations.
The sG type and NLS type are still outstanding.
The difficulties are how to recover trigonometric functions
in the sG type and how to deal with the involved complex conjugate functions in the NLS type.
In this paper, we will provide a formulation to convert these two types of bilinear equations into their nonlinear forms.
In addition, the nonlinearization related to the equations involving the Hilbert transformations
will also be considered.

The paper is organized as follows.
In Section \ref{sec-2}, we recall the basic notions and notations of the Bell polynomials and their
connections with Hirota's bilinear derivatives.
Section \ref{sec-3} and Section \ref{sec-4} illustrate with examples
the procedure to convert the sG-type and NLS-type bilinear equations to their nonlinear forms, respectively.
In Section \ref{sec-5}, we introduce some properties of the Hilbert transformation,
and consider nonlinearizations related to the equations involving the Hilbert transformations.
Finally, concluding remarks  are given in Section \ref{sec-6}.

\section{Preliminary}\label{sec-2}
In this section, we briefly review the Bell polynomials and their connections with
Hirota's bilinear derivatives.
For details one can refer to \cite{GLNW-PRSL-1996,LLSW-JPA-1994}, or refer to Sec.2 of \cite{ZLZ-CTP-2025}.

For $F$ and $G$ being  C$^{\infty}$ functions of $\mathbf{x}=(x_1,x_2,\cdots, x_{\ell})$,
their Hirota's bilinear derivative is defined as \cite{Hir-PTP-1974,Hirota-book}
\begin{equation}\label{D-Hirota}
 D^{n_{1}}_{x_{1}}\cdots D^{n_{\ell}}_{x_{\ell}} F(\mathbf{x})\cdot G(\mathbf{x})
 =(\partial_{x_{1}}-\partial_{x_{1}^{\prime}})^{n_{1}}
 \cdots (\partial_{x_{\ell}}-\partial_{x_{\ell}^{\prime}})^{n_{\ell}}
F(\mathbf{x}) G(\mathbf{x}^\prime)|_{\mathbf{x}^{\prime}=\mathbf{x}},
\end{equation}
where $D$ is called Hirota's bilinear operator.

Let $h=h(x)$ be a $\mathrm{C}^\infty$ function of $x$ and
denote $h_r=\partial^{r}_{x} h$ for $r=1,2,\cdots$.
Then, the so-called 	 Bell polynomials are generated by $h(x)$ are \cite{B-AM-1934}	
\begin{equation}\label{Y-def}
	Y_{[nx]}(h) \doteq Y_n(h_1, h_2, \cdots, h_n) = e^{-h}\partial^{n}_{x}e^{h}, ~~~~~
	n=1,2,\cdots.
\end{equation}
These polynomials can also be generated recursively by
\begin{equation}\label{Y-p-recur}
	Y_0=1, ~~ Y_{n+1}(h_1,h_2,\cdots,h_{n+1})=(\partial_x+h_1)Y_n(h_1,h_2,\cdots,h_n),
\end{equation}
which agrees with the (potential) Burgers hierarchy (see \cite{GLNW-PRSL-1996,LLSW-JPA-1994}).
Hereafter we call the polynomials defined in \eqref{Y-def}
$Y$-polynomials, the first three of which read
\begin{subequations}
\begin{align}
&Y_{[x]} =h_1,\\
&Y_{[2x]} =h_2+h_{1}^{2},\\
&Y_{[3x]} =h_3+3h_1h_2+h_{1}^{3}.
\end{align}
\end{subequations}
Replacing $h(x)$ by $h(-x)$ in \eqref{Y-def}, one can  find that
\begin{align}
Y_{n}(-h_{1},h_{2}, \cdots ,(-1)^{n}h_{n})
=(-1)^{n}Y_{[nx]}(h),\label{-Y}
\end{align}
of which a special case yields
\begin{equation}\label{odd-Y}
	Y_{[(2n+1)x]}(h)|_{h_1=h_3=\cdots=h_{2n+1}=0}=0.
\end{equation}
In addition, using the relation
\begin{equation}
(FG)^{-1}\partial^{n}_{x}(FG)
=\sum_{p=0}^{n}\binom{n}{p}(F^{-1}\partial^{n-p}_{x}F)(G^{-1}\partial^{p}_{x}G),
\end{equation}
and taking $F=e^{h}$, $G=e^{g}$,
one can obtain an addition formula for $Y$-polynomials:
\begin{equation}
Y_{[nx]}(h+g)
=\sum_{p=0}^{n}\binom{n}{p}Y_{[(n-p)x]}(h)Y_{[px]}(g), \label{add-Y}
\end{equation}
where $\binom{n}{p}=\frac{n!}{p!(n-p)!}$.

In 1-dimension case, the relation between Hirota's bilinear derivatives and
Bell polynomials is described as the following \cite{GLNW-PRSL-1996,LLSW-JPA-1994}
\begin{equation}\label{D-Y-2}
(FG)^{-1}D^{n}_{x} F\cdot{G}=
Y_{n}(h_{1},\cdots, h_{n})|_{h_{m}= f_{m}+(-1)^{m}g_{m}},
\end{equation}
where $f_{m}=\partial^m_x f,~ g_{m}=\partial^m_x g$.

We also need to introduce binary deformation of the Bell polynomials:
\begin{equation}\label{Bell-binary}
\mathcal{Y}_{[nx]}(v,u)=Y_{n}(h_{1},\cdots,h_{n})|_{h_{2j-1}=v_{2j-1},\,h_{2j}=u_{2j}}.
\end{equation}
and call them  $\mathcal{Y}$-polynomials,
where $v_{m}=\partial^m_x v,~ u_{m}=\partial^m_x u$.
The first three of them are
\begin{subequations}
\begin{align}
&\mathcal{Y}_{[x]}(v,u)=v_{1},\\
&\mathcal{Y}_{[2x]}(v,u)=u_{2}+v^{2}_{1},\\
&\mathcal{Y}_{[3x]}(v,u)=v_{3}+3u_{2}v_{1}+v^{3}_{1}.
%&\mathcal{Y}_{[4x]}(v,u)=u_{4}+4v_{3}v_{1}+3u^{2}_{2}+6u_{2}v^{2}_{1}+v^{4}_{1}.
\end{align}
\end{subequations}
In terms of the binary Bell polynomials,  the relation \eqref{D-Y-2} is written as
\begin{equation}\label{D-Y-3}
(FG)^{-1}D^{n}_{x} F\cdot G =\mathcal{Y}_{[nx]}(v=\ln (F/G),u=\ln (FG)),
\end{equation}
where  $u=f+g$ and $v=f-g$.
When $G=F$, we have $v=0$, and then in light of \eqref{Bell-binary} and \eqref{odd-Y}
we have
\begin{equation}\label{Y-odd}
	\mathcal{Y}_{[(2j+1)x]}(v=0,u) =0,
\end{equation}
which agrees with the property $D^{2j+1}_x F\cdot F=0$.
We introduce $P$-polynomials for the case of $n=2j$:
\begin{equation}\label{P}
P_{[2jx]}(u) \equiv \mathcal{Y}_{[2jx]}(0,u),
\end{equation}
where the first three read
\begin{subequations}\label{P-polys}
\begin{align}
&P_{[0]}(u) = 1,\\
&P_{[2x]}(u) = u_{2},\\
&P_{[4x]}(u) = u_{4}+3u^{2}_{2}.
%&P_{[6x]}(u) = u_{6}+15u_{2}u_{4}+15u_{2}^{3}.
\end{align}
\end{subequations}
In this case, the bilinear derivative $D^{2j}_{x} F\cdot F$ can be formulated by the $P$-polynomials:
\begin{equation}\label{D-Y-4}
	F^{-2}D^{2j}_{x} F\cdot F =P_{[2jx]}(u=2\ln F).
\end{equation}

Next, we come to the multidimensional generalization.
Assume $h=h(\mathbf{x})$ is an $\ell$-dimension C$^{\infty}$ function
of $\mathbf{x}=(x_1,x_2,\cdots, x_{\ell})$
and denote
\begin{equation}
	h_{r_{1},\cdots,r_{\ell}} =\partial^{r_{1}}_{x_{1}}\cdots\partial^{r_{\ell}}_{x_{\ell}}h(x_{1},\cdots,x_{\ell}).
\end{equation}
The $\ell$-dimension Bell polynomials (also noted as $Y$-polynomials) of $h(\mathbf{x})$
are defined by \cite{GLNW-PRSL-1996,LLSW-JPA-1994}
\begin{equation}\label{Bell-el}
	Y_{[n_{1}x_{1}, \cdots, n_{\ell}x_{\ell}]}(h) \doteq
	Y_{n_{1}, \cdots, n_{\ell}}(\{h_{r_{1}, \cdots, r_{\ell}}\})
	=e^{-h}\partial^{n_{1}}_{x_{1}} \cdots \partial^{n_{\ell}}_{x_{\ell}}e^{h}.
\end{equation}
Examples are the following. When $h=h(x_1,x_2)$, we have
\begin{align*}
& Y_{[x_1,x_2]}=h_{1,0}h_{0,1}+h_{1,1},\\
& Y_{[2x_1,x_2]}=2h_{1,0}h_{1,1}+h_{1,0}^2 h_{0,1}+h_{2,0}h_{0,1}+h_{2,1},\\
& Y_{[3x_1,x_2]}=3h_{1,0}^2h_{1,1}+3h_{1,1} h_{2,0}+3h_{1,0}h_{2,1}
+h_{0,1}(h_{1,0}^3+3h_{1,0}h_{2,0}+h_{3,0})+h_{3,1},
\end{align*}
and when $h=h(x_1,x_2,x_3)$, we have
\[Y_{[x_1,x_2,x_3]}=h_{1,0,0}h_{0,1,1}+h_{1,0,1}h_{0,1,0}+h_{1,1,0}h_{0,0,1}
+h_{1,0,0}h_{0,1,0}h_{0,0,1}+h_{1,1,1}.\]
In addition, from the definition, it holds that
\begin{equation}
	Y_{n_{1}, \cdots, n_{\ell}}(\{(-1)^{r_{1}+ \cdots +r_{\ell}}h_{r_{1}, \cdots, r_{\ell}}\})
	=(-1)^{n_{1}+ \cdots +n_{\ell}}Y_{n_{1}, \cdots, n_{\ell}}(\{h_{r_{1}, \cdots, r_{\ell}}\}).
\end{equation}

The binary Bell polynomials of the  multidimensional case can  be introduced \cite{GLNW-PRSL-1996,LLSW-JPA-1994}:
\begin{subequations}\label{Bell-binary-md}
\begin{equation}
\mathcal{Y}_{[n_{1}x_{1},\cdots,n_{\ell}x_{\ell}]}(v,u) \doteq
Y_{n_{1},\cdots,n_{\ell}}(\{h_{r_{1},...,r_{\ell}}\})
\end{equation}
where
\begin{equation}
h_{r_{1},\cdots,r_{\ell}}=\biggl\{\begin{array}{ll}
u_{r_{1},\cdots,r_{\ell}}, & \mathrm{if}~ r_{1}+\cdots+r_{\ell}  ~\mathrm{is~ even}, \\
v_{r_{1},\cdots,r_{\ell}}, & \mathrm{if }~ r_{1}+\cdots+r_{\ell}  ~\mathrm{is~ odd}.
\end{array}
\end{equation}
\end{subequations}
Again, we provide some examples.
In 2-dimension case, i.e., $\ell=2$ and denoting $x_1=x, x_2=y$, we have
\begin{subequations}
\begin{align}
	&\mathcal{Y}_{[x]}(v,u) =v_{x},\\
	&\mathcal{Y}_{[2x]}(v,u) =u_{2x}+v^{2}_{x},\\
	&\mathcal{Y}_{[x,y]}(v,u) =u_{x,y}+v_{x}v_{y},\\
	&\mathcal{Y}_{[3x]}(v,u) =v_{3x}+3v_{x}u_{2x}+v_{x}^{3},\\
	&\mathcal{Y}_{[2x,y]}(v,u) =v_{2x,y}+2v_{x}u_{x,y}+v^{2}_{x}v_{y}+u_{2x}v_{y},
\end{align}
\end{subequations}
where we use $u_{kx,sy}$ to denote $\partial_x^k\partial_y^s u(x,y)$.
The simplest example for 3 dimensional case ($(x_1,x_2,x_3)=(x,y,z)$) is
\begin{equation}
  \mathcal{Y}_{[x,y,z]}(v,u) =v_{x,y,z}+v_{x}u_{y,z}+v_{y}u_{x,z}+v_{z}u_{x,y}+v_{x}v_{y}v_{z}.
\end{equation}

For the multidimensional binary Bell polynomials, we also have
\[\mathcal{Y}_{[n_{1}x_{1},\cdots,n_{\ell}x_{\ell}]}(v=0,u)=0,~~~
\mathrm{if}~ n_{1}+\cdot\cdot\cdot+n_{\ell}~ \mathrm{is~ odd}.\]
When $n_{1}+\cdots+n_{\ell}$ is even, we denote
\begin{align}
P_{[n_{1}x_{1},\cdots,n_{\ell}x_{\ell}]}(u) =\mathcal{Y}_{[n_{1}x_{1},\cdots,n_{\ell}x_{\ell}]}(v=0,u),
\end{align}
and some examples are
\begin{subequations}
\begin{align}
& P_{[2x]}(u) =u_{2x}, \label{2.24a}\\
& P_{[x,y]}(u) =u_{x,y}, \label{2.24b}\\
& P_{[2x,2y]}(u) =2u^2_{x,y}+u_{2x} u_{2y}+u_{2x,2y},\\
& P_{[3x,y]}(u) =u_{3x,y}+3u_{2x}u_{x,y}.
\end{align}
\end{subequations}

In the multidimensional case, the addition formula reads
\begin{equation}
	Y_{[n_{1}x_{1}, \cdots, n_{\ell}x_{\ell}]}(f+g)
	=\sum_{p_{1}=0}^{n_{1}}\cdots \sum_{p_{\ell}=0}^{n_{\ell}}
	\prod_{i=1}^{\ell}\binom{n_{i}}{p_{i}}Y_{[(n_{1}-p_{1})x_{1}, \cdots, (n_{\ell}-p_{\ell})x_{\ell}]}(f)
	Y_{[p_{1}x_{1}, \cdots, p_{\ell}x_{\ell}]}(g).
\end{equation}
Introducing  $F = e^{f(\mathbf{x})} $ and $G =e^{g(\mathbf{x})}$,
the Hirota bilinear derivatives defined in \eqref{D-Hirota} can be formulated as \cite{GLNW-PRSL-1996,LLSW-JPA-1994}
\begin{equation}\label{D-Y-5}
	(FG)^{-1}D_{x_{1}}^{n_{1}}\cdots D_{x_{\ell}}^{n_{\ell}}F \cdot G
	=Y_{n_{1},\cdots,n_{\ell}}(\{h_{r_{1},\cdots,r_{\ell}}=f_{r_{1},\cdots,r_{\ell}}+(-1)^{r_{1}
		+\cdots+r_{\ell}}g_{r_{1},\cdots, r_{\ell}}\}).
\end{equation}
and also through binary Bell polynomials (taking $u = f+g$ and $v=f-g$):
\begin{equation}\label{trans-formula}
	(FG)^{-1}D^{n_{1}}_{x_{1}}\cdots D^{n_{\ell}}_{x_{\ell}} F\cdot G
	=\mathcal{Y}_{[n_{1}x_{1},...,n_{\ell}x_{\ell}]}(v=\ln (F/G), u=\ln (FG)),
\end{equation}
and in particular, when $F=G$, we have
\begin{align}\label{2.30}
	F^{-2}D^{n_{1}}_{x_{1}}\cdots  D^{n_{\ell}}_{x_{\ell}}F \cdot F=
	P_{[n_{1}x_{1},...,n_{\ell}x_{\ell}]}(u=2\ln F)
\end{align}
where $n_{1}+\cdots +n_{\ell}$ is even.
These formulae have been used to convert Hirota's bilinear equations into nonlinear forms
for the KdV type and mKdV type \cite{ZLZ-CTP-2025}.
In the following, we will work on the nonlinearization of the sG-type bilinear equations
and the ones involving complex variables.

\section{The sG-type bilinear equations and nonlinearization}\label{sec-3}

In this section, we will first recall Hietarinta's study of the sG-type bilinear equations in \cite{Hie-1987c}.
Then we will illustrate how trigonometric functions come up with the nonlinearization of this type of bilinear equations.

\subsection{The sG-type bilinear equations}\label{sec-3-1}

In 1987, Hietarinta \cite{Hie-1987c} considered the following type of coupled bilinear equations
(named as the sG-type):
\begin{subequations}\label{sG-bilinear-0}
\begin{align}
&A_{1}(D_{x},D_{t})F\cdot F+B_{1}(D_{x},D_{t})F\cdot G+C_{1}(D_{x},D_{t})G\cdot G=0,\\
&A_{2}(D_{x},D_{t})F\cdot F+B_{2}(D_{x},D_{t})F\cdot G+C_{2}(D_{x},D_{t})G\cdot G=0,
\end{align}
\end{subequations}
where $A_i$ and $C_{i}$ ($i=1,2$) are some binary even polynomials. The above system has
a 1-soliton solution of the following form when the polynomials $B_1$ and $B_2$ are proportional
(including the case that one of the $B_i$ is 0, without loss of generality, we set $B_1 \neq 0$):
\begin{subequations}\label{sg-1ss}
\begin{align}
&F=1,\\
&G=e^{\eta}, ~~ \eta=px+\Omega t+\eta^{(0)},
\end{align}
\end{subequations}
where $p,\Omega,\eta^{(0)}\in \mathbb{C}$, $p$ and $\Omega$ are subject to the dispersion relation:
\begin{equation}
B_{1}(p,\Omega)=0.
\end{equation}

When $B_{1}$ and $B_{2}$ are proportional and $B_{1}\neq0$, Hietarinta looked for the 2-soliton solution of
the following form
\begin{subequations}\label{sg-2ss-0}
\begin{align}
&F=1+A_{12}e^{\eta_{1}+\eta_{2}},\\
&G=e^{\eta_{1}}+e^{\eta_{2}}, ~~ \eta_{i}=p_{i}x+\Omega_{i}t+\eta_{i}^{(0)},\label{def-eta}
\end{align}
\end{subequations}
where $p_{i},\Omega_{i},\eta_{i}^{(0)}\in \mathbb{C}$, $i=1,2$,
$A_{12}$ is to be determined, and $\eta_{i}$ satisfies the dispersion relation
\begin{equation}\label{sg-dp-2-1}
B_{1}(p_{i},\Omega_{i})=0, ~~ i=1,2.
\end{equation}
It turns out that, to make \eqref{sg-2ss-0} to be a solution to \eqref{sG-bilinear-0},
apart from \eqref{sg-dp-2-1}, $A_{i}$ and $B_{i}$ still need to satisfy the following conditions \cite{Hie-1987c}:
\begin{subequations}\label{sg-2ss}
\begin{align}
&A_{12}A_{i}(p_{1}+p_{2},\Omega_{1}+\Omega_{2})+C_{i}(p_{1}-p_{2},\Omega_{1}-\Omega_{2})=0,\\
&B_{i}(-p_{j},-\Omega_{j})=0.
\end{align}
\end{subequations}
where $i,j=1,2$.

To sum up, if the sG-type bilinear system \eqref{sG-bilinear-0} satisfies the following conditions, then it has
the 1 and 2-soliton solutions of the form like \eqref{sg-1ss} and \eqref{sg-2ss} \cite{Hie-1987c}:
\begin{itemize}
\item
$B_{1}$ and $B_{2}$ are proportional, $B_{1}\neq 0$ and
\begin{equation}\label{sg-dp-2}
B_{1}(p_{j},\Omega_{j})=B_{1}(-p_{j},-\Omega_{j})=0, ~~ j=1,2;
\end{equation}
\item
$A_{i}$ and $C_{i}$ satisfy (see equation (24) in \cite{Hie-1987c})
\begin{equation}
A_{1}(p_{1}+p_{2},\Omega_{1}+\Omega_{2})C_{2}(p_{1}-p_{2},\Omega_{1}-\Omega_{2})
=A_{2}(p_{1}+p_{2},\Omega_{1}+\Omega_{2})C_{1}(p_{1}-p_{2},\Omega_{1}-\Omega_{2});
\end{equation}
\item assume
$A_{2}\neq 0$, then
\begin{equation}\label{def-A}
A_{12}=-\frac{C_{2}(p_{1}-p_{2},\Omega_{1}-\Omega_{2})}{A_{2}(p_{1}+p_{2},\Omega_{1}+\Omega_{2})}.
\end{equation}
\end{itemize}

Under the existence of a 2-soliton solution, Hietarinta studied the sG-type equations
which may have 3-soliton solutions of the following type
\begin{subequations}
\begin{align}
&F=1+A_{12} e^{\eta_1 +\eta_2} +A_{13} e^{\eta_1 +\eta_3} +A_{23} e^{\eta_2 +\eta_3},\\
&G=e^{\eta_1} +e^{\eta_2} +e^{\eta_3} +A_{12}A_{13}A_{23} e^{\eta_1 +\eta_2 +\eta_3},
\end{align}
\end{subequations}
where $\eta_{i}=p_{i}x+\Omega_{i}t+\eta_{i}^{(0)}$ satisfying the dispersion relation \eqref{sg-dp-2},
and $A_{ij}$ is defined as in \eqref{def-A}, i.e.,
\begin{equation}
A_{ij}=-\frac{C_{2}(p_{i}-p_{j},\Omega_{i}-\Omega_{j})}{A_{2}(p_{i}+p_{j},\Omega_{i}+\Omega_{j})}.
\end{equation}
The main results of the sG-type bilinear equations possessing 3-soliton solutions are as follows:
\begin{itemize}
\item For $A_{2}(D_{x},D_{t})=-C_{2}(D_{x},D_{t})=D_{x}^{2},~A_{1}=C_{1}=0,~B_{2}=0$,
the possible types of $B_{1}$ are listed in Table I (Accepted final result) in \cite{Hie-1987c}.
\item For $A_{2}(D_{x},D_{t})=-C_{2}(D_{x},D_{t})=D_{x}D_{t},~A_{1}=C_{1}=0,~B_{2}=0$,
the possible types of $B_{1}$ are listed in Table II (Accepted final result) in \cite{Hie-1987c}.
\item
For a gernal case, see Eq.(56) in \cite{Hie-1987c}.
\end{itemize}

\subsection{Nonlinearization}\label{sec-3-2}

In this section, by some  examples we show how nonlinear equations are recovered from the sG-type bilinear equations.

The first one is a toy example of the sG-type bilinear equations (see Table I (Accepted final result) of \cite{Hie-1987c})
\begin{subequations}\label{sg-1}
\begin{align}
&D_x^2  (F \cdot F - G \cdot G) =0,\label{sg-1-a}\\
&(D_x^2 + 1)  F \cdot G =0.\label{sg-1-b}
\end{align}
\end{subequations}
Using the formula \eqref{2.30}
and the transformations
\begin{equation}
v=\ln (F/G),~~ u=\ln(FG),
\end{equation}
for equation \eqref{sg-1-a} we have
\begin{align*}
(F G) ^{-2}D_x^2  (F \cdot F - G \cdot G)
=&G^{-2} (F^{-2} D_x^2 F \cdot F) - F^{-2} (G^{-2}D_x^2 G \cdot G)\\
=&G^{-2} P_{[2x]}(2 \ln F) - F^{-2} P_{[2x]}(2 \ln G).
\end{align*}
From \eqref{2.24a} and noticing that $2\ln F=u+v,~ 2\ln G=u-v$, we arrive at
\begin{equation}\label{sg-1-a1}
e^{-(u-v)} (u+v)_{xx} -  e^{-(u+v)} (u-v)_{xx}=0,
\end{equation}
i.e.
\begin{equation}\label{u_xx v_xx}
u_{xx} =-\frac{ e^v + e ^{-v} }{e^v - e ^{-v}} v_{xx} = -v_{xx} \coth (v) .
\end{equation}
Similarly, for \eqref{sg-1-b}, using \eqref{trans-formula}  we find
\begin{equation}\label{sg-1-b1}
(F G)^{-1} (D_x^2 + 1) F \cdot G=\mathcal{Y}_{[2x]}(v,u) + 1 =u_{xx} + v_x^2 + 1 = 0.
\end{equation}
Substituting \eqref{u_xx v_xx} into \eqref{sg-1-b1}  we have
\begin{equation}
	- v_{xx}\coth(v) + v_x^2 +1 = 0.
\end{equation}
Then we introduce $w$ by applying a variable transformation
\begin{equation}\label{v-w-1}
v = \ln \bigg( \tan \frac{w}{4} \bigg),
\end{equation}
we obtain
\begin{equation}\label{w_xx=-sin w}
w_{xx} = - \sin w,
\end{equation}
where
\begin{equation}\label{sg-tr}
w=4\arctan(e^v)=4\arctan(F/G).
\end{equation}
It is notable  that \eqref{w_xx=-sin w} has a three-soliton solution \cite{Hie-1987c}
while it is an ordinary differential equation. Similar examples are found
in other one-dimensional cases (e.g. Table I, II, III in \cite{Hie-1987a}),
which means that ordinary differential equations may possess multi-soliton solutions.

The second example is the standard bilinear sG equation:
\begin{subequations}\label{sg-bi}
\begin{align}
& D_x D_t (F \cdot F - G \cdot G) =0,\label{sg-a}\\
& (D_x D_t - 1) F \cdot G =0.\label{sg-b}
\end{align}
\end{subequations}
Using the formula \eqref{trans-formula}, from \eqref{sg-a} we have
\begin{equation*}
\begin{aligned}
(F  G)^{-2}
D_x D_t  (F \cdot F - G \cdot G)
=&G^{-2} P_{[x,t]}(2 \ln F) - F^{-2} P_{[x,]t}(2 \ln G)\\
=&e^{-u} [ (e^v - e^{-v} ) u_{xt} + (e^v + e^{-v} ) v_{xt}  ]=0,
\end{aligned}
\end{equation*}
where the polynomial \eqref{2.24b} is used.
It then yields
\begin{equation}\label{sg-a-2}
u_{xt} =-\frac{ e^v + e ^{-v} }{e^v - e ^{-v}} v_{xt}
= - v_{xt}\coth (v) .
\end{equation}
Similarly, \eqref{sg-b} yields
\begin{equation}\label{sg-b-1}
(F G)^{-1}
(D_x D_t - 1) F \cdot G=\mathcal{Y}_{[x,t]}(v,u)-1=u_{xt} + v_x v_t -1 = 0.
\end{equation}
Substituting \eqref{sg-a-2} into \eqref{sg-b-1}  we have
\begin{equation}\label{sg-non-1}
- v_{xt}\coth(v) + v_x v_t -1 = 0.
\end{equation}
Introducing variable $w$ as \eqref{v-w-1},
the above equation becomes the sG equation
\begin{equation}\label{sg-eq}
w_{xt} = \sin w,
\end{equation}
and the transformation between $w$ and $F$ and $G$ takes \eqref{sg-tr}.
In the above procedure, the transformation \eqref{v-w-1} plays a key role in the formulation
of the sine function.

For the last example we  consider a high dimensional case of the sG type equations
(see \cite{Hie-1987c}, Table II, the column ``Generalizations with $Y$''):
\begin{subequations}\label{sg-3-bi}
\begin{align}
&(D_x D_y +1 ) F \cdot G =0,\label{sg-3-b}\\
&D_x D_t  (F \cdot F - G \cdot G) =0.\label{sg-3-a}
\end{align}
\end{subequations}
The second equation gives rise to \eqref{sg-a-2}, while for the first one we have
\begin{subequations}
\begin{align}
(F G)^{-1}
(D_x D_y +1 ) F \cdot G
&= \mathcal{Y}_{[x,y]}(v,u) + 1 \nonumber \\
&=u_{xy} + v_x v_y +1 =0.\label{sg-3-b-2}
\end{align}
\end{subequations}
One can differentiate \eqref{sg-a-2} with respect to $y$ and \eqref{sg-3-b-2} with respect to $t$,
and then eliminate $u_{xyt}$ from them, to obtain an equation
\begin{equation}\label{w_xyt}
w_{xyt} = w_x w_{xy}\cot w -  w_y w_{xt}\tan w,
\end{equation}
where we also introduced $w$ by \eqref{v-w-1}.
The above equation has been presented in \cite{Lou-2003} (see equation (28) in \cite{Lou-2003})
as  a reduction related to the (2+1) dimension sG system found by
Konopelchenko and Rogers \cite{KR-PLA-1991}:
\begin{subequations}\label{sG-2+1}
\begin{align}
&\left(\frac{\phi_\xi}{\sin \theta}\right)_{\xi} -\left(\frac{\phi_\mu}{\sin \theta}\right)_{\mu}
+\frac{\phi_{\mu}\theta_{\xi}-\phi_{\xi}\theta_{\mu}}{\sin^2\theta}=0,\\
&\left(\frac{\psi_\xi}{\sin \theta}\right)_{\xi} -\left(\frac{\psi_\mu}{\sin \theta}\right)_{\mu}
+\frac{\psi_{\mu}\theta_{\xi}-\psi_{\xi}\theta_{\mu}}{\sin^2\theta}=0,
\end{align}
\end{subequations}
where both $\phi$ and $\psi$ are functions of $(\xi, \mu, t)$ and $\theta_t=\phi-\psi$.
This system has received intensively attention from Darboux transformation \cite{N-PLA-1992,N-PANRW-1993,S-JPA-1992}
and dressing method \cite{DK-IP-1993,KD-SAMP-1993}, right after it was proposed.
There is a reduction of \eqref{sG-2+1} due to taking $\psi=0$,
which yields a scalar (2+1) dimension sG equation in terms of $\theta$ \cite{KR-PLA-1991}:
\begin{align}
\left(\frac{\theta_{\xi t}}{\sin \theta}\right)_{\xi} -\left(\frac{\theta_{\mu t}}{\sin \theta}\right)_{\mu}
+\frac{\theta_{\mu t}\theta_{\xi}-\theta_{\xi t}\theta_{\mu}}{\sin^2\theta}=0.
\end{align}
The above equation gives rise to equation \eqref{w_xyt} after defining
\begin{equation}
w(x,y,t)=2\theta(\xi=x+y, \mu=x-y,t).
\end{equation}

\section{The NLS-type bilinear equations and nonlinearization}\label{sec-4}

In this section, we will show the nonlinearization of the NLS-type bilinear equations.
The NLS-type bilinear equations considered in \cite{Hie-1988} by Hietarinta are the following:
\begin{subequations}\label{4.1}
\begin{align}
& B(D_x,D_t) G \cdot F=0,\\
& A(D_x,D_t) F \cdot F= C(D_x,D_t) G \cdot G^*,
\end{align}
\end{subequations}
where $*$ denotes  complex conjugate, the binary polynomials $A, B, C$ satisfy
\begin{subequations}
\begin{align}
&A(0,0)=0,\\
&[B(X,T)]^* = B(-X^*, -T^*),\label{cond b}\\
&[C(X,T)]^* = C(-X^*, -T^*),
\end{align}
\end{subequations}
and in addition, $A$ is assumed to be even with real coefficients.
Bilinear equations in this type automatically possess  a 1-soliton solution  in the following form \cite{Hie-1988}:
\begin{equation}
F = 1 + K e^{\eta + \eta^*}, \quad G = e^\eta,
\end{equation}
where
\begin{equation}
\eta = p x + \Omega t +  \eta^{(0)},~~~  p,\Omega,\eta^{(0)}\in \mathbb{C},
\end{equation}
satisfying the dispersion relation
\begin{equation}
B(p,\Omega)=0,
\end{equation}
and $K$ being defined as
\begin{equation}
K = \frac{C(p - p^*, \Omega - \Omega^*)}{2A(p + p^*, \Omega + \Omega^*)}.
\end{equation}
Note that in this type the bilinear equations do not always have 2-soliton solutions.
In \cite{Hie-1988} Hietarinta searched the above NLS-type bilinear quations that allow
having 2-soliton solutions, and he claimed that having 2-soliton solutions implies integrability for this type of
bilinear equations.

In the following we employ some examples to show how to recover a nonlinear form from
a set of NLS-type bilinear equations.
The first example is the standard bilinear NLS  equations  (cf.\cite{Hir-JMP-1973})
\begin{subequations}\label{nls-1}
\begin{align}
(D_x^2 + i D_t) G \cdot F = 0,\label{nls-1-a}\\
D_x^2 F \cdot F = G G^*,\label{nls-1-b}
\end{align}
\end{subequations}
where $i^2=-1$, and we assume $F$ is a  real function and $G$ is complex.
Introduce
\begin{equation}\label{v,u-3}
v=\ln(F/G),~u=\ln(FG).
\end{equation}
Using  formula \eqref{trans-formula},
from \eqref{nls-1-a} we have
\begin{align}
(FG)^{-1}(D_x^2 + i D_t) G \cdot F
&=\mathcal{Y}_{[2x]}(v,u) - i \mathcal{Y}_{[t]}(v,u) \nonumber\\
&=u_{xx} + v_x^2 - iv_t=0.
\label{nls-1-a-1}
\end{align}
For \eqref{nls-1-b}, we rewrite it as
\[ F^{-2} D_x^2 F \cdot F =\frac{G G^*}{F^2}.\]
It then follows from \eqref{D-Y-4} that
\[ P_{[2x]}(u+v) =\bigg( \frac{G}{F} \bigg) \bigg( \frac{G}{F} \bigg)^*,\]
i.e.
\begin{equation}\label{nls-1-b-1}
(u+v)_{xx}
=e^{-v} (e^{-v})^*.
\end{equation}
Eliminating $u_{xx}$ from   \eqref{nls-1-b-1} and \eqref{nls-1-a-1}, we get
\begin{equation}\label{nls-bell}
iv_t + v_{xx} - v_x^2- e^{-v} (e^{-v})^*   = 0.
\end{equation}
Now we introduce
\begin{equation}
w=e^{-v},
\end{equation}
and then we can get an equation in terms of $w$:
\begin{equation}\label{nls-eq}
i w_t + w_{xx} + |w|^2 w = 0,
\end{equation}
which is nothing but the NLS equation, where $|w|^2=ww^*$.

The next example is
\begin{subequations}\label{4.14}
\begin{align}
&(i a D_x^3 + D_x D_t + i D_y +b ) G \cdot F = 0,\\
&D_x^2 F \cdot F = G G^*,
\end{align}
\end{subequations}
where $F$ is real and $G$ is complex, $a$ and $b$ are real parameters.
This is listed  in the abstract of \cite{Hie-1988} as a new NLS-type bilinear equation
that has a 2-soliton solution.
Under the transformation \eqref{v,u-3} and using \eqref{trans-formula} and \eqref{D-Y-4},
we have
\begin{align}
& (FG)^{-1}(i a D_x^3 + D_x D_t + i D_y +b ) G \cdot F \nonumber \\
= ~& - i a \mathcal{Y}_{[3x]} (v,u) + \mathcal{Y}_{[x,t]} (v,u)
- i \mathcal{Y}_{[y]} (v,u) + b \nonumber \\
=~ &-i a (v_{3x} + 3 v_x u_{2x} + v_x^3 )+ u_{xt} +v_x v_t -i v_y + b
 = 0, \label{nls-3-bi-1}
\end{align}
and
\begin{equation}\label{nls-3-bi-2}
P_{[2x]} (u+v) = (u+v)_{xx} =\frac{GG^*}{F^2}.
\end{equation}
Introducing
\begin{equation}
\psi = e^{-v},
\end{equation}
Eq.\eqref{nls-3-bi-2} yields
\begin{equation}\label{u-psi-1}
u_{xx} =- v_{xx} + | \psi |^2.
\end{equation}
Differentiating \eqref{u-psi-1} with respect to $t$
and introducing $w$ by
\begin{equation}\label{w-3}
w=u_{xt}+v_{xt},
\end{equation}
then, from \eqref{nls-3-bi-1} and \eqref{u-psi-1} we obtain
\begin{subequations}\label{twist-eq}
\begin{align}
&i a ( \psi_{xxx} +3 \psi_x | \psi |^2) + \psi_{xt} + w \psi + i \psi_y + b \psi =0,\\
&w_x = ( | \psi |^2 )_t,
\end{align}
\end{subequations}
which connects with its bilinear form \eqref{4.14} via
\begin{equation}
\psi=\frac{G}{F}, ~~ w = \frac{2D_{x}D_t F \cdot F}{F^2}.
\end{equation}
If $a=b=0$, \eqref{twist-eq} reduces to
\begin{equation}
i \psi_y + \psi_{xt} + w \psi =0, ~\quad w_x = ( | \psi |^2 )_t,
\end{equation}
which is known as the ``breaking soliton NLS equation''
(cf.\cite{LZ-JPA-1993,WWZ-CTP-2020,ZZJ-CPL-2010}).
Zakharov first presented the Lax pair of this equation \cite{Zakh-1980},
and therefore it is also known as the integrable Zakharov equation.

The third example  is
\begin{subequations}\label{zakharov-bi}
	\begin{align}
		&(i D_t + D_x^2) G \cdot F =0,\\
		&(D_x^2 - D_t^2) F \cdot F = G G^*,
	\end{align}
\end{subequations}
where $F$ is real and $G$ is complex.
With  \eqref{v,u-3}, \eqref{trans-formula} and \eqref{D-Y-4},
we have
\begin{subequations}\label{zakharov-bi-1}
\begin{align}
&-i \mathcal{Y}_{[t]} (v,u) + \mathcal{Y}_{[2x]} (v,u)
=-iv_t + u_{xx} + v_x^2 = 0,\\
& P_{[2x]} (q)  - P_{[2t]} (q)
=q_{xx} - q_{tt} = \frac{G G^*}{F^2},
\end{align}
\end{subequations}
in which $q=2\ln F=u+v$.
Introducing
\begin{equation}
E=e^{-v}, ~~\rho=-(u+v)_{xx}=-q_{xx},
\end{equation}
we have
\begin{subequations}\label{zakharov-eq}
\begin{align}
& i E_t + E_{xx} = \rho E,\\
& \rho_{tt} - \rho_{xx} = | E^2 |_{xx}.
\end{align}
\end{subequations}
Eq.\eqref{zakharov-eq} is also called the Zakharov equation
derived by Zakharov  in \cite{Zakh-1972},
but it has been shown non-integrable \cite{Hie-1988,BB-PLA-1983,Herr-JPA-1983}.

There exists another way to derive nonlinear forms through bilinear equations
without dealing with the complex conjugations directly.
Let us take the bilinear NLS system \eqref{nls-1} as an example to illustrate the procedure.
In stead of considering \eqref{nls-1}, we started from its unreduced form
\begin{subequations}\label{akns-bi}
\begin{align}
&(i D_t + D_x^2) G \cdot F =0,\\
&(i D_t - D_x^2) H \cdot F =0,\\
&D_x^2 F \cdot F = -G H,
\end{align}
\end{subequations}
where at this stage  it is not necessary to require $F, G, H$ to be real or complex.
Eq.\eqref{nls-1} can be considered as a special case of \eqref{akns-bi} under the reduction $F=F^*$ and $G=-H^*$.
Under the transformation \eqref{v,u-3} and using \eqref{trans-formula} and  \eqref{D-Y-4} we have
\begin{subequations}\label{akns-bi-1}
\begin{align}
&-i \mathcal{Y}_{[t]} (v,u) + \mathcal{Y}_{[2x]} (v,u)
=-iv_t + u_{xx} + v_x^2 = 0,\\
&-i \mathcal{Y}_{[t]} (w,s) - \mathcal{Y}_{[2x]} (w,s)
=-iw_t - s_{xx} - w_x^2 = 0,\\
&P_{[2x]} (2\ln F)=(2\ln F)_{xx} = -\frac{GH}{F^2}=-e^{-v-w}, \label{4.28c}
\end{align}
\end{subequations}
where
\begin{equation}
v=\ln \frac{F}{G},~~u=\ln FG,~~ w=\ln \frac{F}{H}, ~~s=\ln HF.
\end{equation}
Noticing that $u,v,w,s$ are related by
\begin{equation}
 u+v = s+w = 2\ln F=q,
\end{equation}
from \eqref{4.28c} we have
\begin{subequations}
\begin{align}
&u_{xx}=q_{xx}-v_{xx}=-v_{xx}-e^{-v-w},\\
&s_{xx}=q_{xx}-w_{xx}=-w_{xx}-e^{-v-w}.
\end{align}
\end{subequations}
Then \eqref{akns-bi-1} can be written as
\begin{subequations}
\begin{align}
&i v_t + v_{xx} + e^{-v-w} - v_x^2 =0,\\
&i w_t -w_{xx} - e^{-v-w} +w_x^2=0.
\end{align}
\end{subequations}
Introducing new variables
\begin{equation}
p=e^{-v}, ~~~r=e^{-w},
\end{equation}
we arrive at the (second order) Ablowitz-Kaup-Newell-Segur (AKNS) system
\begin{subequations}\label{akns-eq}
\begin{align}
&i p_t + p_{xx} -p^2 r = 0,\\
&i r_t -r_{xx} + p r^2 = 0.
\end{align}
\end{subequations}
It is connected with the bilinear form \eqref{akns-bi} via $p=G/F, r=H/F$,
and \eqref{nls-eq} can be derived  by taking $r=-p^*$ in \eqref{akns-eq}.
It admits more freedom to study the unreduced system \eqref{akns-bi}
than the reduced one \eqref{nls-1}, in particular in finding solutions for
nonlocal equations, e.g. \cite{CDLZ-SAPM-2018,DCCZ-SAPM-2025}.

\section{The BO-type bilinear equations and nonlinearizations}\label{sec-5}

In this section, we will  provide examples of the nonlinaerization towards  the Benjamin-Ono (BO) type equations
that involve the Hilbert transformations.
First, let us briefly recall the Hilbert transformation and its properties.
For details and  further introduction one can refer to \cite{King-book}.

\subsection{Hilbert transformation}\label{sec-5-1}

The Hilbert transformation of a given function $f(x),~x\in \mathbb{R}$, is defined as
\begin{equation}\label{H}
	\mathcal{H} [f(x)] = \frac{1}{\pi} \mathrm{P.V.} \int_{-\infty}^{+\infty} \frac{f(s)}{s-x} \mathrm{d}s
= \frac{1}{\pi} \lim_{\epsilon\to 0^+}\left( \int_{-\infty}^{x-\epsilon}+
 \int_{x+\epsilon}^{+\infty} \right)
 \frac{f(s)}{s-x} \mathrm{d}s,
\end{equation}
where P.V. denotes the Cauchy principal value of the integral.

The Hilbert transform has the following properties that play important roles in the variable transformations.

\begin{proposition}\label{prop-4-1}\cite{King-book}
If both $f(x)$ and $f_x(x)$ are
functions in $L^p(\mathbb{R})$ space where $1< p <+\infty$,
then the transformation $\mathcal{H}$ and differentiation $\partial_x$ commute,
i.e.
\begin{equation}
	\partial_x \mathcal{H} [f(x)]=
	\mathcal{H} [\partial_x f(x)].
\end{equation}
\end{proposition}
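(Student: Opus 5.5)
The plan is to prove the identity in the sense of distributions, using the fact that $\mathcal{H}$ is a convolution operator that is bounded on $L^p(\mathbb{R})$, and then to identify the distributional derivative with a genuine $L^p$ function.

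\textbf{Step 1: convolution form.} By the substitution $s=x-t$ in \eqref{H}, we have $\mathcal{H}[f](x) = -\frac{1}{\pi}\,\mathrm{P.V.}\int \frac{f(x-t)}{t}\,\mathrm{d}t$. Hence $\mathcal{H}$ is (minus) the Hilbert kernel convolution, and it commutes with translations:
\[
\mathcal{H}[f(\cdot+h)](x)=\mathcal{H}[f](x+h).
\]
We take as known from \cite{King-book} the M. Riesz theorem: $\mathcal{H}$ is a bounded operator on $L^p(\mathbb{R})$ for $1<p<\infty$, with $\|\mathcal{H}g\|_p\le C_p\|g\|_p$.

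\textbf{Step 2: difference quotients.} Set $\Delta_h f(x)=\big(f(x+h)-f(x)\big)/h$. By linearity and translation invariance, $\Delta_h\mathcal{H}[f]=\mathcal{H}[\Delta_h f]$ holds exactly. Since $f_x\in L^p$, $f$ is absolutely continuous, and we can write
\[
\Delta_h f(x)=\int_0^1 f_x(x+\theta h)\,\mathrm{d}\theta .
\]
Minkowski's integral inequality together with the continuity of translation in $L^p$ (valid for $p<\infty$) then gives $\Delta_h f\to f_x$ in $L^p$ as $h\to0$. By the boundedness of $\mathcal{H}$,
\[
\Delta_h\mathcal{H}[f]=\mathcal{H}[\Delta_h f]\to \mathcal{H}[f_x] \quad \text{in } L^p .
\]

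\textbf{Step 3: identify the limit as the derivative.} This is the main subtle point, since $L^p$ convergence of difference quotients does not immediately give a pointwise derivative. Pair against a test function $\varphi\in C_c^\infty$ and shift the difference quotient onto $\varphi$:
\[
\int \Delta_h\mathcal{H}[f]\,\varphi\,\mathrm{d}x=-\int \mathcal{H}[f]\,\Delta_{-h}\varphi\,\mathrm{d}x .
\]
As $h\to0$ the right side tends to $-\int \mathcal{H}[f]\,\varphi'\,\mathrm{d}x$, because $\Delta_{-h}\varphi\to\varphi'$ in $L^{p'}$. The left side tends to $\int \mathcal{H}[f_x]\,\varphi\,\mathrm{d}x$. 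Therefore the weak derivative of $\mathcal{H}[f]$ is $\mathcal{H}[f_x]\in L^p$, so $\mathcal{H}[f]\in W^{1,p}$.

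\textbf{Step 4: pointwise statement.} A $W^{1,p}$ function on $\mathbb{R}$ has an absolutely continuous representative whose classical derivative equals its weak derivative almost everywhere. This yields $\partial_x\mathcal{H}[f]=\mathcal{H}[\partial_x f]$ almost everywhere, and everywhere wherever both sides are continuous, for instance if $f$ is smooth.
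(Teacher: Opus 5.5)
Your proof is correct. The paper gives no argument of its own for Proposition~\ref{prop-4-1}; it simply imports the statement from \cite{King-book}, so your route is genuinely different from what is on the page.

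The bare citation is economical. Your argument is self-contained modulo the M.~Riesz theorem, and it shows where each hypothesis enters: $p>1$ gives boundedness of $\mathcal{H}$, and $p<\infty$ gives continuity of translations, hence $\Delta_h f\to f_x$ in $L^p$. It also pins down the precise conclusion: $\mathcal{H}[f]\in W^{1,p}(\mathbb{R})$ with weak derivative $\mathcal{H}[f_x]$. This absolute-continuity form is exactly what the paper needs when it integrates $u_x=-i\mathcal{H}[v_x]$ to $u=-i\mathcal{H}[v]+c$ in Proposition~\ref{prop-4-4}.

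Two small sharpenings. First, your inference that $f$ is absolutely continuous requires reading $f_x$ as a weak derivative or as an everywhere-existing classical derivative. This matters. Take the Cantor-type bump $f(x)=c(\min(x,2-x))$ on $[0,2]$, with $c$ the Cantor function and $f=0$ elsewhere. Then $f_x=0$ a.e., yet $\partial_x\mathcal{H}[f](x)=\frac{1}{\pi}\int_0^2 f(s)(s-x)^{-2}\,\mathrm{d}s>0$ for $x\notin[0,2]$. So the statement genuinely needs absolute continuity, which is automatic for the smooth $F,G$ of the paper.

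Second, in Step~4 you need not pass to a representative. Since $f\in W^{1,p}$ is H\"older continuous of order $1-1/p$ and $t^{-1}\chi_{\{|t|>1\}}\in L^{p'}$, the principal value defining $\mathcal{H}[f](x)$ converges at every $x$ and is continuous. Hence $\mathcal{H}[f]$ is itself the absolutely continuous representative, and the identity holds everywhere as soon as $\mathcal{H}[f_x]$ is continuous.
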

\begin{proposition}\label{prop-4-2}\cite{King-book}
If function $f(z),~z\in \mathbb{C}$, is analytic in the upper half plane
and $f(z) \to 0$ for $\mathrm{Im}\, z \geq 0$ as $|z| \to \infty$,
then for any $x  \in \mathbb{R}$, there is
\begin{equation}
\mathcal{H}[f(x )] = i f(x ).
\end{equation}
Similarly, for the function $g(z)$ analytic in the lower half plane, and $g(z) \to 0$
for $\mathrm{Im}\, z \leq 0$  as $|z| \to \infty$, we have
\begin{equation}\label{lem-lower-half-plane}
\mathcal{H}[g(x)] = -i g(x), ~~ x  \in \mathbb{R}.
\end{equation}
\end{proposition}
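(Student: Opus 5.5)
We prove the first claim with Cauchy's integral theorem applied to a closed contour in the upper half plane; the second claim then follows by complex conjugation. Fix $x\in\mathbb{R}$ and $0<\epsilon<R$. Let $\Gamma_{\epsilon,R}$ be the closed, positively oriented contour made of four pieces: the segment $[-R,x-\epsilon]$, the small upper semicircle $C_\epsilon$ of radius $\epsilon$ centred at $x$ (traversed clockwise), the segment $[x+\epsilon,R]$, and the large upper semicircle $C_R$ of radius $R$ centred at $x$ (traversed counterclockwise). The function $z\mapsto f(z)/(z-x)$ is analytic on and inside $\Gamma_{\epsilon,R}$, since the pole at $z=x$ has been cut out. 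Here we assume, as the statement implicitly does, that $f$ extends continuously to the real axis. Cauchy's theorem then gives
\begin{equation*}
\left(\int_{-R}^{x-\epsilon}+\int_{x+\epsilon}^{R}\right)\frac{f(s)}{s-x}\,\mathrm{d}s
+\int_{C_\epsilon}\frac{f(z)}{z-x}\,\mathrm{d}z+\int_{C_R}\frac{f(z)}{z-x}\,\mathrm{d}z=0 .
\end{equation*}

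Next we pass to the limits $\epsilon\to0^+$ and $R\to\infty$, handling each arc separately.

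\textbf{Small arc.} Write $z=x+\epsilon e^{i\theta}$ with $\theta$ running from $\pi$ down to $0$. Then $\mathrm{d}z/(z-x)=i\,\mathrm{d}\theta$, so
\begin{equation*}
\int_{C_\epsilon}\frac{f(z)}{z-x}\,\mathrm{d}z=-i\int_0^{\pi}f(x+\epsilon e^{i\theta})\,\mathrm{d}\theta .
\end{equation*}
By continuity of $f$ at $x$, this tends to $-i\pi f(x)$.

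\textbf{Large arc.} Similarly,
\begin{equation*}
\left|\int_{C_R}\frac{f(z)}{z-x}\,\mathrm{d}z\right|\le \pi\max_{|z-x|=R,\ \mathrm{Im}\,z\ge0}|f(z)| ,
\end{equation*}
and the right-hand side tends to $0$ because $f(z)\to0$ uniformly as $|z|\to\infty$ in the closed upper half plane.

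\textbf{Conclusion.} The integral over the two real segments therefore converges, and its limit is exactly the principal value integral in \eqref{H}. Hence
\begin{equation*}
\mathrm{P.V.}\int_{-\infty}^{+\infty}\frac{f(s)}{s-x}\,\mathrm{d}s=i\pi f(x),
\qquad\text{so}\qquad \mathcal{H}[f(x)]=if(x).
\end{equation*}

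\textbf{Main obstacle.} The delicate step is justifying the limits, specifically the order and meaning of the improper integral at $\pm\infty$. The decay $f\to0$ makes the large arc vanish, but it does not by itself make $\int f(s)/(s-x)\,\mathrm{d}s$ absolutely convergent. We avoid needing that: the symmetric truncation by $R$, combined with the contour identity, shows the symmetric limit exists, and we interpret \eqref{H} in that symmetric sense. This is the standard convention in \cite{King-book}. If a stronger sense of convergence were required, we would add a mild hypothesis such as $f\in L^p(\mathbb{R})$.

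\textbf{Lower half plane.} For $g$ analytic in the lower half plane, the function $f(z):=\overline{g(\bar z)}$ is analytic in the upper half plane and satisfies the same decay condition. The first part gives $\mathcal{H}[f]=if$ on $\mathbb{R}$. Since the kernel $1/(s-x)$ is real, $\mathcal{H}$ commutes with complex conjugation, and $f=\bar g$ on the real axis, so $\mathcal{H}[\bar g]=i\bar g$. Conjugating this identity yields \eqref{lem-lower-half-plane}. Alternatively, one can repeat the contour argument with lower semicircles: the small arc is then traversed counterclockwise and contributes $+i\pi g(x)$, which produces the opposite sign directly.
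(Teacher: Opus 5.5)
Your proof is correct. The paper gives no proof of this proposition: it is simply quoted from King's book. Your indented-semicircle contour argument is the standard textbook proof of it, so there is no competing route to compare.

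Your signs are right for the paper's kernel $1/(s-x)$. The clockwise small arc gives $-i\pi f(x)$ and the large arc vanishes, so $\mathcal{H}[f]=if$. The reduction of the lower-half-plane case via $f(z)=\overline{g(\bar z)}$, together with the reality of the kernel, is also valid.

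Two small clean-ups are needed:
\begin{itemize}
\item If $C_R$ is centred at $x$, the real segments should be $[x-R,x-\epsilon]$ and $[x+\epsilon,x+R]$, so that the contour actually closes. Comparing with truncation at $\pm R$ is harmless: the integrals over the two leftover intervals, of fixed length $|x|$, tend to $0$ because $f$ decays.
\item Cauchy's theorem on a contour lying partly on $\mathbb{R}$ needs more than analyticity in the open half-plane. You need either analyticity in a neighbourhood of the closed upper half-plane, or the usual limiting argument that raises the segments to $\mathrm{Im}\,z=\delta\to0^+$ using continuity of $f$ up to the boundary. The first condition holds in the paper's application $f=F_x/F$, since the zeros of $F$ lie strictly in the lower half-plane.
\end{itemize}

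Your remark that decay alone guarantees only the symmetric limit at $\pm\infty$, not separate convergence of the two improper integrals in the definition of $\mathcal{H}$, is accurate and worth keeping.
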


\begin{proposition}\label{prop-4-3}
 Consider two  functions $F(z), G(z), z\in \mathbb{C}$,
which are analytic in the upper half and lower half complex plane, respectively.
If their zeros are located  in the lower half and upper half complex plane, respectively,
i.e.
\begin{subequations}\label{cond-hilbert-trans-formula}
\begin{align}
& F(s_j)=0, ~ ~\mathrm{Im}\, s_j  <0, ~~
(j=1,2,\cdots, M),\\
& G(z_j)=0, ~ ~\mathrm{Im}\, z_j  >0, ~~
(j=1,2,\cdots, N),
\end{align}
\end{subequations}
and they also satisfy $\partial_x\ln F \to 0$ (for $\mathrm{Im}\, z \geq 0$)
and $\partial_x\ln G \to 0$ (for $\mathrm{Im}\, z \leq 0$) as $z$ goes to infinity,
then it holds that
\begin{equation}\label{hil-eq1}
i \mathcal{H} \bigg[\partial_x \ln \frac{F}{G} \bigg]
= - \partial_x  \ln (FG),
\end{equation}
where $x=\mathrm{Re}(z)$.
Note that $\partial_x F(z)=\partial_z F(z)$ holds in the upper half complex plane and and be extended to the real axis.
Similarly, $\partial_x G(z)=\partial_z G(z)$ holds as well on the real axis.
\end{proposition}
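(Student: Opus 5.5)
Split the left-hand side of \eqref{hil-eq1} by linearity of $\mathcal{H}$:
\[
\partial_x \ln \frac{F}{G} = \partial_x\ln F - \partial_x \ln G ,
\]
and treat the two pieces separately with Proposition \ref{prop-4-2}. The claim then reduces to
\begin{equation*}
\mathcal{H}[\partial_x \ln F] = i\,\partial_x \ln F, \qquad \mathcal{H}[\partial_x \ln G] = -i\,\partial_x \ln G .
\end{equation*}
Indeed, if these hold, then
\[
i\mathcal{H}[\partial_x\ln(F/G)] = i\bigl(i\,\partial_x\ln F + i\,\partial_x \ln G\bigr) = -\partial_x\ln F - \partial_x\ln G = -\partial_x\ln(FG).
\]

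For the first identity, take $f(z)=\partial_z \ln F(z) = F'(z)/F(z)$. Since $F$ is analytic in the upper half plane, $f$ is analytic wherever $F\neq 0$. By \eqref{cond-hilbert-trans-formula}, every zero $s_j$ of $F$ has $\mathrm{Im}\, s_j<0$, so $f$ has no poles in the closed upper half plane. Note that $\ln F$ itself may be multivalued, but its derivative $F'/F$ is single-valued, and only this quantity enters. The decay hypothesis $\partial_x \ln F\to 0$ as $|z|\to\infty$ with $\mathrm{Im}\, z\ge 0$ gives exactly the decay condition of Proposition \ref{prop-4-2}. Using the stated fact that $\partial_x F=\partial_z F$ extends to the real axis, the boundary values of $f$ on $\mathbb{R}$ are $\partial_x\ln F(x)$. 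Proposition \ref{prop-4-2} then gives $\mathcal{H}[\partial_x\ln F]=i\,\partial_x\ln F$.

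The second identity follows the same way. Take $g(z)=G'(z)/G(z)$. It is analytic in the closed lower half plane, because the zeros $z_j$ of $G$ satisfy $\mathrm{Im}\, z_j>0$, and it decays there by hypothesis. The lower-half-plane version \eqref{lem-lower-half-plane} then gives $\mathcal{H}[\partial_x\ln G]=-i\,\partial_x\ln G$.

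The main obstacle is the regularity needed to apply Proposition \ref{prop-4-2}: analyticity must hold up to and including the real axis, so the principal value integral is well defined and no zero of $F$ or $G$ lies on $\mathbb{R}$. Both are guaranteed by the strict inequalities in \eqref{cond-hilbert-trans-formula} together with the analyticity assumptions. One must also make sure the decay is strong enough for the Hilbert integral to converge. I would note that in the intended applications (rational $F$, $G$), $F'/F$ decays like $1/z$, which suffices. Otherwise I would simply take the decay hypothesis as the one required by Proposition \ref{prop-4-2}.
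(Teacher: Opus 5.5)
Your proof is correct and follows the paper's proof: you apply Proposition \ref{prop-4-2} to $F'/F$ in the upper half plane and to $G'/G$ in the lower half plane, then combine by linearity via $i\mathcal{H}[f-g]=i(if+ig)=-(f+g)$. Your extra remarks on regularity up to the real axis and on the decay needed for the principal-value integral go slightly beyond the paper, which simply takes these as covered by the hypotheses of Proposition \ref{prop-4-2}.
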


\begin{proof}
Consider two functions
\begin{equation}
f(z) = \partial_x \ln F(z)
= \frac{F_x}{F},
\end{equation}
\begin{equation}
g(z) = \partial_x \ln G(z)
= \frac{G_x}{G}.
\end{equation}
Since all the zeros of $F$ are in the lower half plane,
$f$ is analytic in the upper half plane and satisfies the condition of Proposition \ref{prop-4-2}.
Thus it follows that
\begin{equation}
\mathcal{H}[ f(x)] = i f(x), ~~ x \in \mathbb{R}.
\end{equation}
Similarly, there is
\begin{equation}
\mathcal{H} [g(x)] = -i g(x), ~~ x \in \mathbb{R}.
\end{equation}
Then we have
\begin{equation}
i \mathcal{H} \bigg[\partial_x\ln \frac{F}{G} \bigg]
= i \mathcal{H} [f-g]
=  i (i f(x) + i g(x) )
= -f -  g
= - \partial_x  \ln (FG),
\end{equation}
which is \eqref{hil-eq1}.
\end{proof}

Combining the formulae \eqref{hil-eq1} and  \eqref{trans-formula}, we immediately find the following.

\begin{proposition}\label{prop-4-4}
For the functions $F(z)$ and $G(z)$ satisfying the conditions in Proposition \ref{prop-4-3},
introducing functions
\begin{equation}\label{uv-FG}
v=\ln(F/G), ~~ u=\ln(FG),
\end{equation}
from formula \eqref{hil-eq1} we have
\begin{equation}\label{u_x=-iHv_x}
u_x =-  i \mathcal{H}[v_x].
\end{equation}
Then, assuming both $F$ and $G$ are functions of $(x_1=x, x_2,\cdots, x_l)$,
in light of formula \eqref{trans-formula}, we obtain
\begin{align}
(FG)^{-1}D^{n_{1}}_{x_{1}}\cdots D^{n_{\ell}}_{x_{\ell}} F\cdot G
&=\mathcal{Y}_{[n_{1}x_{1},...,n_{\ell}x_{\ell}]}(v=\ln(F/G), u=\ln(FG)) \nonumber \\
&=\mathcal{Y}_{[n_{1}x_{1},...,n_{\ell}x_{\ell}]}
(v,  -i\int \mathcal{H}[v_x] \mathrm{d}x),
\label{4.51}
\end{align}
which connects bilinear derivatives with the function $v$ and the related Hilbert transformation.
In particular, if  $v, v_x \in L^p(\mathbb{R})$\footnote{
In this paper when say $f(z)\in L^p(\mathbb{R})$
we mean both $\mathrm{Re} f(z=x)$ and $\mathrm{Im} f(z=x)$  belong to $L^p(\mathbb{R})$.
 }
where $1< p <+\infty$, then, in light of Proposition \ref{prop-4-1},
 \eqref{u_x=-iHv_x}
yields $u=-i \mathcal{H}[v]+c$
where $c$ is a constant.
\end{proposition}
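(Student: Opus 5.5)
The plan is to obtain Proposition \ref{prop-4-4} by combining Proposition \ref{prop-4-3} with the general Bell-polynomial identity \eqref{trans-formula}. No new analysis is needed beyond checking that the hypotheses carry over.

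First, with $v=\ln(F/G)$ and $u=\ln(FG)$ as in \eqref{uv-FG}, differentiate in $x$. Since $v_x=\partial_x\ln(F/G)$ and $u_x=\partial_x\ln(FG)$, identity \eqref{hil-eq1} reads $i\mathcal{H}[v_x]=-u_x$, which is exactly \eqref{u_x=-iHv_x}. Here $F$ and $G$ are taken as functions of $z$ whose real-axis restriction in $x=x_1$ is the relevant variable. The other variables $x_2,\dots,x_\ell$ enter only as parameters, so Proposition \ref{prop-4-3} applies for each fixed value of them.

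Next, \eqref{trans-formula} holds for any smooth $F,G$. It gives the first equality in \eqref{4.51} directly. Integrating \eqref{u_x=-iHv_x} in $x$ determines $u=-i\int\mathcal{H}[v_x]\,\mathrm{d}x$ up to a function independent of $x$. Substituting this expression for $u$ into $\mathcal{Y}$ gives the second line of \eqref{4.51}. The one point needing care is that $\mathcal{Y}_{[n_1x_1,\dots]}$ involves derivatives $u_{r_1,\dots,r_\ell}$ with $r_1+\dots+r_\ell$ even, including ones with $r_1=0$. For those, the integration "constant" (a function of $x_2,\dots,x_\ell$) would matter. So the formula \eqref{4.51} is to be read with $u$ defined as the specific antiderivative equal to $\ln(FG)$, which is the intended interpretation.

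Finally, for the last claim, assume $v,v_x\in L^p(\mathbb{R})$ with $1<p<\infty$. Proposition \ref{prop-4-1} then gives $\mathcal{H}[v_x]=\partial_x\mathcal{H}[v]$, so \eqref{u_x=-iHv_x} becomes $\partial_x(u+i\mathcal{H}[v])=0$. Hence $u=-i\mathcal{H}[v]+c$, with $c$ independent of $x$.

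The main obstacle, modest as it is, is making the conventions consistent. The footnote defines $L^p$ membership of complex-valued $v$ componentwise, and the Hilbert transform is linear over $\mathbb{C}$, so Proposition \ref{prop-4-1} applies to the real and imaginary parts separately. One must also note that the integration constant may depend on the remaining variables $x_2,\dots,x_\ell$.
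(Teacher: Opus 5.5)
Your proposal is correct and follows the paper's own argument. Reading \eqref{hil-eq1} in terms of $u$ and $v$ gives \eqref{u_x=-iHv_x}, substituting into \eqref{trans-formula} gives \eqref{4.51}, and Proposition \ref{prop-4-1} gives $u=-i\mathcal{H}[v]+c$; your observation that the integration ``constant'' may depend on $x_2,\dots,x_\ell$ (so $u$ in \eqref{4.51} must be read as the particular antiderivative $\ln(FG)$) is a precision the paper leaves implicit.
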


\subsection{Nonlinearization of the BO-type equations}\label{sec-5-2}

In \cite{Hie-1988} Hietarinta also investigated the following type of bilinear forms,
namely, the BO-type bilinear equations,
\begin{equation}\label{bo-def}
P(D_x,D_t) F \cdot F^* = 0,
\end{equation}
where $P$ is a binary polynomial satisfying  $[P(X,T)]^* = P(-X^*, -T^*)$.
This is a complex bilinear equation with one dependent variable $F$ and its complex conjugation.
It always has a 1-soliton solution
\begin{subequations}\label{F-BO}
\begin{equation}
F = 1 + e^{\eta},~\eta= p x + \Omega t + \eta^{(0)},
\end{equation}
$\eta = p x + \Omega t +  \eta^{(0)}$, and $ p,\Omega,\eta^{(0)}\in \mathbb{C}$
satisfying the dispersion relation
\begin{equation}
P(p,\Omega)=0.
\end{equation}
\end{subequations}

Now we come to the examples of the nonlinearization for the BO-type bilinear equations.
The first one is the bilinear BO equation \cite{Matsu-JPA-1979,Matsuno-book}:
\begin{equation}\label{BO-1}
	( D_x^2 + i D_t ) F \cdot F^* = 0,
\end{equation}
where
$F$ satisfies the conditions proposed in Proposition \ref{prop-4-3}.
Introduce
\begin{equation}\label{v,u}
v = \ln \frac{F}{F^*}, ~~ u = \ln (F F^*).
\end{equation}
On one hand,  using  formula \eqref{4.51} and relation \eqref{u_x=-iHv_x},  we have
\begin{align}
(F F^*)^{-1}( D_x^2 + i D_t ) F \cdot F^*
&=\mathcal{Y}_{[2x]}(v,u) + i \mathcal{Y}_{[t]}(v,u)  \nonumber \\
&=-i \partial_x \mathcal{H}[v_x]+v_x^2 + i v_t
= 0. \label{bo-bell}
\end{align}
On the other hand, if we further assume the solution $F$ satisfies certain condition\footnote{
There do exist solutions that fulfill such assumptions,
e.g.\cite{GHLYZ-SAPM-2026,Matsu-JPA-1979,Matsuno-book,Matsu-PLA-1986,Matsu-JMP-1991}
for multi-soliton solutions.}
such that $v_x$ and $v_{xx}$ belong to $L^p(\mathbb{R})$ space where $1< p <+\infty$,
then, in light of Proposition \ref{prop-4-1}, we obtain
\begin{equation}\label{BO-1-q}
	q_t - q_x^2 - \mathcal{H}[ q_{xx}] = 0, ~~~ (q=iv).
\end{equation}
Next, introducing $w=q_x$, we obtain the Benjamin-Ono equation
\begin{equation}
w_t - 2 w w_x - \mathcal{H} [w_{xx}] = 0.
\end{equation}
To trace back to the bilinear form \eqref{BO-1}, we have
\begin{equation}\label{w-F}
w =-i\partial_x \ln \frac{F^*}{F}.
\end{equation}

The second example is (see main results in the abstract of \cite{Hie-1988})
\begin{equation}\label{BO-2}
	(i a D_x^3 + D_x^2+ i D_t) F \cdot F^* = 0,
\end{equation}
where $a\in \mathbb{R}$,
$F$ satisfies the conditions proposed in Proposition \ref{prop-4-3}.
We employ the transformation \eqref{v,u} and still
assume  $v_x, v_{xx}\in L^p(\mathbb{R})$  where $1< p <+\infty$.
Compared with equation \eqref{BO-1}, we only need to figure out the nonlinear part generated from
$i a D_x^3   F \cdot F^*$.
Using \eqref{4.51} we have
\begin{equation}\label{ex3-bell}
i a (FF^*)^{-1} D_x^3   F \cdot F^*
= i a \mathcal{Y}_{[3x]}(v,u)
= i a ( v_{xxx} - 3i v_{x}\mathcal{H}[v_{xx}] + v_x^3).
\end{equation}
Combined with \eqref{BO-1-q}, we obtain
\begin{equation}
	q_t - q_x^2 -\mathcal{H}[q_{xx}]
+ a (q_{xxx}	- q_x^3 	-3q_{x} \mathcal{H} [q_{xx}] )= 0, ~~ (q=i v=-i\ln (F^*/F)).
\end{equation}

\begin{remark}\label{rem-1}
In Ref.\cite{Hie-1988} Hietarinta considered the BO-type bilinear equations of the form \eqref{bo-def}
that allows a ``2-soliton'' solution generalizing the form  \eqref{F-BO}.
The bilinear equation \eqref{BO-2} is one of such equations he found.
A simple solution of  \eqref{BO-2} is (cf. equation \eqref{F-BO}):
\begin{equation}
F(x)=1+e^{k x - i(  k^2+i a k^3)t +\eta^{(0)}}, ~~ k, \eta^{(0)}\in \mathbb{C}.
\end{equation}
We extend it from $x\in \mathbb{R}$ to the complex plane:
\begin{equation}\label{F-peorid}
F(z)=1+e^{k z - i(  k^2+i a k^3)t +\eta^{(0)}}, ~~ z\in \mathbb{C}.
\end{equation}
Its zeros are obtained when
\begin{equation}
k z - i(  k^2+i a k^3)t +\eta^{(0)}=i(2s+1)\pi, ~~ s\in \mathbb{Z},
\end{equation}
If we take $k=i \alpha,~ \eta^{(0)}=\gamma$ where $\alpha, \gamma\in \mathbb{R}$,
then the zeros of $F(z)$ are
\begin{equation}
z_j=(a \alpha^2 -\alpha)t +\frac{(2s+1)\pi}{\alpha}+i\frac{\gamma}{\alpha}, ~~ s\in \mathbb{Z}.
\end{equation}
which can be contained in the lower half complex plane when $\gamma/ \alpha<0$.
Thus, $F(z)$ in \eqref{F-peorid} is an entire function with all zeros $\{z_j\}$
located in the lower half complex plane in light of the above setting.
However, for such a function, it turns out that $(\ln F)_x$ is a periodic function when $k=i \alpha$,
which means it does not hold any longer for
the condition  $(\ln F)_x \to 0$ for $\mathrm{Im}\, z \geq 0$ as $|z| \to \infty$,
and consequently, such an $F(z)$ does not satisfy the condition of Proposition \ref{prop-4-4}.
In this paper, we focus on the nonlinearization of bilinear BO-type equations.
One may look for solutions of \eqref{BO-2} that satisfy the conditions in  Proposition \ref{prop-4-4}.
It is also possible to consider the periodic Hilbert transformation
(cf.\cite{King-book}, Chapter 6)
\begin{equation}\label{H-p}
	\mathcal{H} [f(x)] = \frac{1}{2L} \mathrm{P.V.} \int_{-L}^{L} \cot \left[\frac{\pi(s-x)}{2L}
\right] f(s) \mathrm{d}s, ~ ~~ f(x)=f(x+2L),
\end{equation}
and set up the propositions as in Sec.\ref{sec-5-1} and study the bilinearization/nonlinearization of the
periodic BO type equations.
We would like to investigate this elsewhere.
\end{remark}

One more example is (see main results in the abstract  of \cite{Hie-1988})
\begin{equation}
(D_x D_t + i (a D_x + b D_t)) F \cdot F^* = 0,
\end{equation}
where $a,b\in \mathbb{R}$,  $F$ is analytic in the upper half plane,
$F(z)\to 0$ as $|z|\to \infty$, and all its zeros are located in the lower half plane.
Again, we employ the transformation \eqref{v,u} and
assume  $v_x, v_{xt}$ belong to  $L^p(\mathbb{R})$ space where $1< p <+\infty$.
Then, we have
\begin{equation}\label{ex4-bell}
\mathcal{Y}_{[x,t]}(v,u)
+ i a\mathcal{Y}_{[x]}(v,u)
+ i b\mathcal{Y}_{[t]}(v,u) \\
  =- i \mathcal{H} [v_{xt}] + v_{x} v_{t}
+ i a v_x + i b v _t = 0,
\end{equation}
i.e.,
\begin{equation}
	\mathcal{H}[q_{xt}] +q_x q_t - a q_x - b q_t = 0, ~~ (q=i v=-i\ln (F^*/F)).
\end{equation}

\subsection{Other examples}\label{sec-5-3}

It is not necessary to restrict us to the bilinear equations of the form \eqref{bo-def}.
For some bilinear equations, one can also introduce  Hilbert transformations
by using Proposition \ref{prop-4-4}.
In this subsection, we always assume $F$ and $G$ satisfy the conditions mentioned in Proposition \ref{prop-4-3},
without repeating them.

Let us start from the following bilinear form:
\begin{equation}\label{sh-bi}
	(F^* F)_t = \frac{1}{2 i} ( F^2 - {F^*}^2 ).
\end{equation}
Directly employing the transformation \eqref{v,u}, we obtain
\begin{equation}\label{5.31}
	u_t =  \frac{e^{v} - e^{-v}}{2i}.
\end{equation}
We further assume  $v, v_x, v_{t}$ belong to  $L^p(\mathbb{R})$ space where $1< p <+\infty$.
This allows us to have $u=-i \mathcal{H} [v] $ from \eqref{u_x=-iHv_x}
and then $u_t=-i\mathcal{H} [v_t]$ in light of Proposition \ref{prop-4-1}.
Thus from \eqref{5.31} we obtain
\begin{equation}
	\mathcal{H} [q_t] =  \sin q, ~ ~~ (q=-i v).
\end{equation}
This is known as the sine-Hilbert equation \cite{Matsu-PLA-1986}  and is related to the bilinear form
\eqref{sh-bi} via $q = i \ln (F^*/F)$.

The next example is (cf.\cite{YLH-CPL-2024})
\begin{equation}\label{msh-bi}
D_t G^* \cdot F = \alpha ( G^* F^* - G F), ~~~\alpha \alpha^*= \frac{1}{4},
\end{equation}
where
\begin{equation}\label{msh v u}
v'=\ln \frac{G^*}{F},~~u'= \ln FG^*
\end{equation}
Equation \eqref{msh-bi} can be represented as
\begin{equation}\label{msh-bi-1}
(F G^*)^{-1} D_t G^* \cdot F
= \mathcal{Y}_{[t]} (v',u')= v'_t
= \alpha \bigg( \frac{F^*}{F} - \frac{G}{G^*} \bigg),
\end{equation}
which gives
\begin{equation}\label{msh-vt}
v'_t = \alpha e^{\frac{1}{2} ( u^{\prime *} - u') }
\left( e^{-\frac{1}{2} ( v^{\prime *} - v') } - e^{\frac{1}{2} ( v^{\prime *} - v') }\right).
\end{equation}
Multiplying it  with its complex conjugate form yields
\begin{equation}\label{msh-vt v^*t}
v'_t v^{\prime *}_t = - \frac{1}{4}
\left( e^{ v^{\prime *} - v' } + e^{-( v^{\prime *} - v') } -2 \right),
\end{equation}
where we have replaced $\alpha \alpha^*$ with $\frac{1}{4}$.
To go further, we write its  left-hand side   into
\begin{equation}\label{msh eq -left}
v'_t v^{\prime *}_t = \frac{1}{4}  [ (v_t^{\prime *} + v'_t)^2 - (v_t^{\prime *} - v'_t)^2 ].
\end{equation}
For the right-hand side of \eqref{msh-vt v^*t}, using the identity
\begin{equation}
\ln \frac{F G}{F^*G^*}
= 2i \arctan \frac{ \mathrm{Im} (F G)}{ \mathrm{Re}( F G)}
= 2i \arctan \frac{F G - F^*G^*}{ i (F G + F^*G^*)}
\end{equation}
and \eqref{msh v u}, we have
\begin{align}
e^{ v^{\prime *} - v' } + e^{-( v^{\prime *} - v') } -2
&=\frac{ [e^{\frac{1}{2} (u^{\prime *} - v^{\prime *} +u' + v')}
- e^{\frac{1}{2} (u^{\prime *}+ v^{\prime *} + u'- v')} ]^2}{e^{( u^{\prime *} + u') }} \nonumber \\
&=2 \frac{( F G - F^* G^* )^2}{2F F^* G G^*}
=2 \frac{(F G)^2 + (F^*G^* )^2}{2F F^* G G^*} -2 \nonumber \\
&=2 \cos \bigg( i \ln \frac{F G}{F^*G^*} \bigg) -2. \label{msh-eq right}
\end{align}
Introducing
\begin{equation}\label{w-vv*}
	w = i (v^{\prime *} - v' ) =i \ln \frac{F G}{F^*G^* }
\end{equation}
and assuming $\partial_t \ln F \to 0$ (for $\mathrm{Im}\,z \geq 0$)
and $\partial_t \ln G \to 0$ (for $\mathrm{Im}\,z \leq 0$) as $z$ goes to $\infty$,
it then follows from Proposition \ref{prop-4-3} that
\begin{align}
\mathcal{H} [w_t] = \mathcal{H} [i \partial_t (v^{\prime *} - v' )]
=&\mathcal{H} \bigg[i \partial_t \ln \frac{F G}{ F^*G^*} \bigg] \nonumber \\
=& \mathcal{H} \bigg[i \partial_t \ln \biggl(\frac{F}{G^*}\bigg / \frac{F^*}{G}\biggr) \bigg]
=- \partial_t \ln \frac{FF^*}{G^* G}
=( v^{\prime *} + v' )_t.
\end{align}
Then, combining \eqref{msh-vt v^*t}, \eqref{msh eq -left}, \eqref{msh-eq right} and the above equation,
we get a nonlinear equation
\begin{equation}
w_t^2 + (\mathcal{H} w_t)^2 +2 \cos w - 2 = 0.
\end{equation}
This is known as the modified sine-Hlibert equation \cite{YLH-CPL-2024}
and is related to the bilinear form \eqref{msh-bi} via the transformation \eqref{w-vv*}.

The third example is (cf.\cite{Naka-JPSJ-1979})
\begin{subequations}\label{mbo-bi}
\begin{align}
&(i D_t - 2  {i} \lambda D_x - D_x^2 - \mu ) F \cdot \tilde G = 0,\\
&(i D_t - 2  {i} \lambda D_x - D_x^2 - \mu ) \tilde{F} \cdot {G} =0,\\
&( D_x + i \gamma) F \cdot{G} - i \nu \tilde{F}\tilde G = 0
\end{align}
\end{subequations}
with $\lambda, \gamma, \mu, \nu$ being constants,
and where, apart from $F$ and $G$ as described in Proposition \ref{prop-4-3},
 $\tilde{G}$ and $\tilde{F}$ are functions
analytic in the upper half and lower half  complex plane, respectively,
while their zeros are respectively located in the lower half and upper half complex plane.
In addition, $\partial_x \ln \tilde G \to 0$ (for $\mathrm{Im}\,z \geq 0$)
and $\partial_x \ln \tilde F \to 0$ (for $\mathrm{Im}\,z \leq 0$) as $z$ goes to $\infty$.
Except for $v$ and $u$ defined as in \eqref{uv-FG}, we also introduce
\begin{equation}\label{mbo v u}
%v=\ln \frac{F}{G},
%~u=\ln FG,
\tilde{v}=\ln \frac{F}{\tilde{G}},
~~\tilde{u}=\ln F \tilde{G},
~~\tilde{\tilde{v}}=\ln \frac{\tilde{F}}{G},
~~\tilde{\tilde{u}}=\ln \tilde{F}G.
\end{equation}
Using the formula \eqref{trans-formula}, we can convert the bilinear system \eqref{mbo-bi} into the following form
\begin{subequations}\label{mbo-bi-1}
\begin{align}
&i \mathcal{Y}_{[t]}(\tilde v,\tilde u) - 2 i \lambda \mathcal{Y}_{[x]}(\tilde v, \tilde u)
- \mathcal{Y}_{[2x]}(\tilde v, \tilde u) - \mu =0,\\
&i \mathcal{Y}_{[t]}(\tilde{\tilde{v}},\tilde{\tilde{u}}) - 2 i \lambda \mathcal{Y}_{[x]}(\tilde{\tilde{v}},\tilde{\tilde{u}})
- \mathcal{Y}_{[2x]}(\tilde{\tilde{v}},\tilde{\tilde{u}}) - \mu =0,\\
& \mathcal{Y}_{[x]}({v}, {u}) + i \gamma = i \nu \frac{\tilde{F}\tilde G}{G F}.
\end{align}
\end{subequations}
These lead to
\begin{subequations}\label{mbo-bi-2}
\begin{align}
&i \tilde v_t -2 i \lambda \tilde v_x - (\tilde u_{xx} + \tilde v_x^2) - \mu = 0,\label{mbo-a}\\
&i \tilde{\tilde{v}}_t -2 i \lambda \tilde{\tilde{v}}_x
- ( \tilde{\tilde{u}}_{xx} + (\tilde{\tilde{v}}_x)^2) - \mu = 0,\label{mbo-b}\\
&  ( \tilde u + \tilde v - \tilde{\tilde{u}} + \tilde{\tilde{v}} )_x + 2 i \gamma =2  i \nu e^{\tilde{\tilde{v}}-\tilde v}, \label{mbo-c}
\end{align}
\end{subequations}
where for the last equation we have used the relation $\tilde u + \tilde v - \tilde{\tilde{u}} + \tilde{\tilde{v}}=-2v$.
Introduce
\begin{equation}\label{mbo w fg}
w =\tilde{\tilde{v}}-\tilde v = \ln \frac{\tilde{F}\tilde G}{{G} F}.
\end{equation}
Then, subtracting \eqref{mbo-b} from \eqref{mbo-a} and differentiating \eqref{mbo-c} with respect to $x$
yield
\begin{subequations}
\begin{align}
& i w_t -2 i \lambda w_x - (\tilde{\tilde{u}} -\tilde u)_{xx}
 - w_x ( \tilde{\tilde{v}} +\tilde v)_x = 0,\label{mbo-2-a}\\
&  (\tilde{\tilde{v}} +\tilde v)_{xx} - (\tilde{\tilde{u}} - u)_{xx}   =  2 i\nu w_x e^w.\label{mbo-2-b}
\end{align}
\end{subequations}
Eliminating $(\tilde{\tilde{u}} - u)_{xx}$ yields
\begin{equation}\label{mbo-3}
i w_t - 2 i \lambda w_x - ( \tilde{\tilde{v}} +\tilde v)_{xx}
+ 2 i \nu w_x e^w - w_x ( \tilde{\tilde{v}} +\tilde v)_x = 0.
\end{equation}
Note that $\mathcal{H}$ is a linear operator, which means
\begin{equation}\label{hilbert fg-1}
\mathcal{H} [w_x] = \mathcal{H} [( \tilde{\tilde{v}}-\tilde v)_x]
= \mathcal{H} \bigg[ \partial_x \ln \frac{\tilde{F}\tilde G}{{G} F} \bigg]
= \mathcal{H} \bigg[ \partial_x \ln \frac{\tilde{F}}{F} \bigg]
   + \mathcal{H} \bigg[ \partial_x \ln \frac{\tilde{G}}{G} \bigg].
\end{equation}
Making use of formula \eqref{hil-eq1} in Proposition \ref{prop-4-3} we have
\begin{equation}\label{hilbert fg-2}
\mathcal{H} [w_x] =-i\partial_x \ln(F{\tilde{F})+i\partial_x \ln(  G \tilde{G})}
=  i \partial_x \ln \frac{\tilde{G} G} {\tilde{F} F}=- i (\tilde{\tilde{v}} + \tilde v)_x.
\end{equation}
Further, we assume  $\tilde v_x, \tilde v_{xx}, \tilde{\tilde{v}}_x, \tilde{\tilde{v}}_{xx}$
belong to  $L^p(\mathbb{R})$ space where $1< p <+\infty$
so that from the above relation we have $\mathcal{H} [w_{xx}] =- i (\tilde{\tilde{v}} +\tilde v)_{xx}$.
Thus \eqref{mbo-3} yields
\begin{equation}
w_t - 2 \lambda w_x + 2 \nu e^w w_x
- \mathcal{H} [w_{xx}] - w_x \mathcal{H} [w_x]  =0,
\end{equation}
which is known as the modified BO equation \cite{Naka-JPSJ-1979}
and is connected with the bilinear form \eqref{mbo-bi} via the transformation \eqref{mbo w fg}.

\section{Concluding remarks}\label{sec-6}

This paper is a continuation of our previous research \cite{ZLZ-CTP-2025}
on the nonlinearization of bilinear equations.
In this paper, we have illustrated how the sG-type and NLS-type bilinear equations
are converted to their nonlinear forms.
We also explored the BO-type equations that contain the Hilbert transformations.
We have introduced some properties of the Hilbert transformations,
which are useful in the nonlnearization/bilinearization of the  BO-type equations.
In our procedure,
solutions (i.e. $F, G$, etc) of the bilinear equations of this type
are required to satisfy special properties for the analytic area and location of zeros.
Sometimes we also need the involved functions are ``good'' enough so that
their Hilbert transformation commutes with differential operations, e.g.
$\partial_x \mathcal{H} [v]= \mathcal{H} [v_x]$.
Note that in practice these requirements can be satisfied in construction of soliton solutions
(which are called rational solutions in \cite{Hie-1988}),
e.g.\cite{GHLYZ-SAPM-2026,Matsu-JPA-1979,Matsuno-book,Matsu-PLA-1986,Matsu-JMP-1991}.
The understanding of these properties of the Hilbert transformations
is useful not only in the nonlnearization of the BO-type bilinear equations
but also in finding bilinear forms for the nonlinear equations with Hilbert transformations.
In addition, there are periodic solutions (also known as multiphase solutions), e.g.
\cite{Naka-JPSJ-1979,DK-MN-1991,GHLYZ-SAPM-2026,SI-JPSJ-1979}.
These solutions do not satisfy the condition for $F$ and $G$ given in Proposition \ref{prop-4-3}.
We have provided an example and given some comments in Remark \ref{rem-1}.
For the nonlinearizarion related to the periodic Hilbert transformation,
we would like to investigate it elsewhere.

In this paper, we did not consider the NLS-type equations with Hilbert transformations,
e.g.
\begin{equation}
i \psi_t +\alpha \mathcal{H} [\psi_x] +\beta |\psi|^2\psi=0
\end{equation}
which has interesting physical contexts \cite{GMCR-PRE-1997,HWQL-2024}, where $\alpha$ and $\beta$ are constants.
Bilinear method has been applied to some NLS-type  equations with Hilbert transformations \cite{Matsu-PLA-2000,Matsu-SAPM-2023,LYHZ-Non-2025}.
Note that no matter whether the bilinear equations are integrable or not,
the sG-type bilinear system \eqref{sG-bilinear-0} always has a 1-soliton solution and 2-soliton solution,
the NLS-type bilinear system \eqref{4.1}  always has a 1-soliton solution
and the BO-type bilinear equation \eqref{bo-def} always has a 1-soliton solution.
This means it is possible to use bilinear method to investigate non-integrable equations.
We believe our research in this paper as well as in our previous one \cite{ZLZ-CTP-2025}
on the nonlinearization of bilinear equations can bring new understanding
in the study of various types of models from the bilinear approach \cite{Hir-PRL-1971,Hirota-book}.

\vskip 20pt
\subsection*{Acknowledgments}

DJZ is supported by the Natural Science Foundation  of China (No.12271334).
XHZ is supported by the Youth Science Fund (Type B) (No.2026QB044) and the Key Project (No.2026ZD036) of the Natural Science Foundation of Inner Mongolia Autonomous Region, the First-class Discipline Research Project of Inner Mongolia Normal University (No.YLXKZX-NSD-001 and YLXKZX-NSD-009) and Program for Innovative Research Team in Universities of Inner Mongolia Autonomous Region (No.NMGIRT2414), China.

\end{document}